\documentclass[10pt]{article}
\usepackage[margin=1in]{geometry}
\usepackage{amsmath,amssymb,amsthm}
\usepackage{graphicx}
\usepackage{booktabs}
\usepackage{url}
\usepackage[hidelinks]{hyperref}

\usepackage{expl3}
\ExplSyntaxOn
\prop_new:N \g_num_prop
\NewDocumentCommand{\defnum}{mm}{\prop_gput:Nnn \g_num_prop {#1} {#2}}
\NewDocumentCommand{\num}{m}{\prop_item:Nn \g_num_prop {#1}}
\ExplSyntaxOff
\defnum{build/deep100m-graft-quality-s}{19,675}
\defnum{build/deep100m-misi-s}{5,250}
\defnum{build/ratio-graft-quality}{3.7}
\defnum{build/sift-misi-s}{12}
\defnum{deep100m/hnsw/qps@0.99}{4,513}
\defnum{e5/diskann-12g-qps@0.987}{1,983}
\defnum{e5/misi-8g-qps}{8}
\defnum{e5/misi-8g-qps-high}{30}
\defnum{e5/misi-8g-qps-low}{8}
\defnum{e5/misi-8g-recall}{0.9936}
\defnum{sat/diskann-max}{0.9984}
\defnum{sat/graft-max}{0.9992}
\defnum{sat/hnsw-max}{0.9989}
\defnum{sat/misi-max}{0.9967}
\defnum{scaling/c99-100M}{20,000}
\defnum{scaling/c99-10M}{10,000}
\defnum{scaling/c99-1M}{5,000}
\defnum{scaling/gamma}{0.30}
\defnum{sift/cov-gain@C500}{0.037}
\defnum{sift/hnsw-over-misi@0.99}{6}
\defnum{sift/hnswlib/qps@0.99}{15,751}
\defnum{sift/misi-over-napp@0.99}{2.2}
\defnum{sift/misi/qps@0.99}{2,637}
\defnum{sift/napp/qps@0.99}{1,207}

\newtheorem{theorem}{Theorem}
\newtheorem{lemma}{Lemma}
\newtheorem{corollary}{Corollary}
\newtheorem{remark}{Remark}
\newtheorem{assumption}{Assumption}
\newtheorem{observation}{Observation}

\newcommand{\misi}{\textsf{misi}}
\newcommand{\kb}{k_b}
\newcommand{\ks}{k_s}

\title{\misi: a Metric Inverted Sample Index}
\author{Edgar Ch\'avez\\ CICESE\\ \texttt{elchavez@cicese.mx}}
\date{}

\begin{document}
\maketitle

\begin{abstract}
We present \misi, an inverted index for approximate nearest-neighbor
search over general metric spaces whose vocabulary is a random sample of
the database itself, of size proportional to $n$. Each object is
represented by its $\kb$ nearest sample points, found by a pluggable
inner index over the sample; queries are answered by an idf-weighted
shared-neighbor vote followed by exact verification of $C$ candidates.
The construction generalizes the neighborhood-approximation (NAPP) index
of Tellez, Ch\'avez and Navarro from a constant number of pivots to a
linear-size vocabulary, which keeps posting lists at a constant expected
length $\rho = \kb/\alpha$ as $n$ grows and turns the index into a
combinator: any high-recall index on $\alpha n$ points yields an index
on $n$ points, for any metric. A probabilistic model of the vote gives a
recall guarantee --- $\kb$ logarithmic in $n$ over the overlap gap
suffices, with a verification budget equal to a per-query confusable
count that the index itself estimates --- and a matching limit: no
processing of the vote resolves overlap differences below order
$1/\sqrt{\kb}$. The design's strengths are structural: construction is
$n$ independent searches --- embarrassingly parallel, deterministic at
any thread count, \num{build/deep100m-misi-s}\,s for $10^8$ vectors on
64 cores, \num{build/ratio-graft-quality}$\times$ faster than a
matched-recall graph build --- it streams under an enforced $3$\,GiB
memory cap, and the artifact is a portable directory of flat arrays that
serves $10^8$ vectors from NVMe within an enforced $8$\,GB budget,
below the working floor of the SSD-graph baseline. Its cost is
query-time work: at matched recall in RAM, saturated graph baselines
answer $6$--$16\times$ faster at $10^6$ and reach marginally higher
recall at $10^8$, and the verification budget for $0.99$ recall grows
as $n^{\num{scaling/gamma}}$. All results carry seeds, saturation
sweeps and full configurations, are generated from run manifests, and
include a catalogue of measured negative results delimiting what
query-time processing of the stored evidence can add. The intended
applications are those where construction cost, determinism, memory
footprint, or a black-box metric dominate the requirements ---
frequently rebuilt corpora, batch similarity workloads, and
constrained-memory serving --- rather than peak query throughput.
\end{abstract}

\section{Introduction}

Metric indexes classically summarize each object by its distances to a
small, fixed set of reference points. The permutation-index family
\cite{chavez2008permutations,amato2008mifile,esuli2012ppindex} replaced
distances by the \emph{order} of references, and its most scalable member,
the neighborhood-approximation index (NAPP) of Tellez, Ch\'avez and
Navarro \cite{tellez2011sisap,tellez2013napp}, kept only each object's $K$
nearest references out of $m$, stored them inverted, and answered queries
by counting shared references and verifying candidates. Naidan, Boytsov
and Nyberg \cite{naidan2015permutation} later showed this family
competitive with the indexes of its day.

This paper asks what happens when NAPP's vocabulary stops being a design
constant and becomes a \emph{sample of the database itself}, of size
$m = \alpha n$ for a fixed rate $\alpha$. The resulting index is
\misi.\footnote{The repository name expands MISIFU as \emph{Metric
Inverted Sample-Index For Ultrascale}; \emph{misi} is also how Spanish
speakers call a cat, and Misif\'u the archetypal cat so called. Both
etymologies are correct.} Two consequences are immediate.
First, the reference search can no longer be brute force --- at
$m = 2{\cdot}10^6$ references, computing each object's nearest references
requires an approximate-nearest-neighbor structure of its own, so the
design becomes a \emph{combinator}: it turns any high-recall index on
$\alpha n$ points into an index on $n$ points, for any metric, since the
core consumes only neighbor identities. Second, the posting lists stop
growing with $n$: at fixed dials their mean length is pinned at
$\rho = \kb/\alpha$, whereas a fixed vocabulary's lists grow linearly.
The experiments in \S\ref{sec:experiments} quantify both effects; the
comparison against NAPP itself, run under one protocol on one machine, is
reported first.

The resulting structure has properties that graph indexes do not offer:
construction is $n$ independent searches against a frozen sample index
(embarrassingly parallel, deterministic at any thread count, and
streamable under a measured $3$\,GiB memory cap); the artifact is a
directory of flat arrays that can be built on one machine and served on
another; and the resident set for serving is governed by the sample, not
the database. It also has a cost that graph indexes do not pay: every
query streams $\sim\!\ks\rho$ postings through an accumulator, and at
matched recall on data that fits in RAM, saturated beam search is
$6$--$16\times$ faster (\S\ref{sec:experiments}). Both facts are
measured and reported with their configurations.

\paragraph{Where these trade-offs pay.} Several deployment patterns
weight the axes the way this design does. \emph{Frequently rebuilt
corpora}: when a collection changes enough that the index is rebuilt
nightly or hourly (catalogues, feeds, logs), construction cost dominates
the total cost of ownership, and an $87$-minute deterministic build of
$10^8$ vectors --- streamable on a small machine --- substitutes for
update machinery entirely. \emph{Batch similarity workloads}:
similarity joins, deduplication sweeps, and $k$NN-graph construction
for downstream learning issue all their queries at once; throughput at
batch scale is the relevant metric, latency is not, and the eager build
amortizes after one block of queries. \emph{Constrained-memory
serving}: personal or on-device corpora served from flash within a few
GB of RAM, at tens of queries per second, in the budget window measured
in \S\ref{sec:experiments} where the SSD-graph baseline does not run.
\emph{Black-box metrics}: the core consumes neighbor identities only,
so any dissimilarity with a high-recall inner index indexes without
modification --- a design property, evaluated here on vectors only.
\emph{Reproducible pipelines}: bit-identical artifacts across thread
counts and build modes simplify auditing and regression testing in
settings where an index is a versioned artifact. None of these
scenarios requires winning the in-RAM QPS comparison, and this paper
does not claim to.

\paragraph{Contributions.}
\begin{enumerate}
\item The structure and its two-dial account --- resolution
  $\rho = \kb/\alpha$, confidence $\kb$ --- with a model
  (\S\ref{sec:model}) giving a recall guarantee
  (Theorem~\ref{thm:vote}) whose budget is an instance quantity the
  index itself estimates, and a resolution limit
  (Lemma~\ref{lem:limit}) for the vote.
\item Measurements under a fixed protocol (seeds, saturation sweeps,
  full configurations per row, numbers generated from run manifests):
  the NAPP and \texttt{hnswlib} comparison at $10^6$
  (Table~\ref{tab:e1}); the scaling law
  $C_{0.99} \propto n^{\num{scaling/gamma}}$ over nested Deep prefixes
  (Table~\ref{tab:scale}); saturated ceilings for four systems at
  $10^8$ (Table~\ref{tab:sat}); and memory-limited serving under
  enforced cgroup budgets (Table~\ref{tab:memory}).
\item A set of scoped negative results (\S\ref{sec:negative}): three
  query-time refinement families, per-query adaptive stopping, distance
  bounds computable from stored ranks (impossible without a stored
  scale; Remark~\ref{rem:noscale}), and posting subsampling
  (dominated by the native dials at equal evidence read).
\item Coverage-driven vocabulary re-sampling with seeds
  (\S\ref{sec:coverage}): a large, stable gain at small verification
  budgets on SIFT ($+\num{sift/cov-gain@C500}$ recall at $C{=}500$),
  shrinking with $C$, and reversing sign on GloVe at large $C$ --- the
  conditions under which it helps and hurts are stated.
\item The measured systems properties: deterministic parallel
  construction ($\num{build/ratio-graft-quality}\times$ the strongest
  graph build at $10^8$), a streaming build under an enforced
  $3$\,GiB cap, build-here/serve-there artifacts, and disk-resident
  serving within $8$\,GB where the SSD-graph baseline does not run.
\end{enumerate}

Everything unmeasured is labeled future work, including the two
compositions the analysis motivates most strongly: a resident
product-quantization screen before exact verification, and recursive
construction (the inner index built by the same method).

\section{Related work}\label{sec:related}

\paragraph{Permutation and reference-based indexing.} Ordering
permutations \cite{chavez2008permutations}; the MI-File
\cite{amato2008mifile,amato2014mifile}, whose later experiments used
reference sets in the tens of thousands; prefix permutations
\cite{esuli2012ppindex}; and NAPP
\cite{tellez2011sisap,tellez2013napp}, the direct ancestor: $K$ nearest
pivots of $m$, inverted, candidates by shared-pivot count with a
threshold, then verification. \misi{} differs in the vocabulary size
($\Theta(n)$ versus a constant), the pivot search (a pluggable ANN index
versus exhaustive), and the candidate rule (a weighted vote and a budget
$C$ versus a count threshold $t$). The pivot-selection literature
\cite{bustos2003pivots} is the classical treatment of the vocabulary
question that \S\ref{sec:coverage} revisits at linear scale.

\paragraph{Graph indexes.} HNSW \cite{malkov2020hnsw}, DiskANN
\cite{jayaram2019diskann}, and GRAFT \cite{graft}, which we use both as
a baseline and as the inner index; its construction is deterministic and
embarrassingly parallel, which the combinator inherits. FreshDiskANN
\cite{freshdiskann} and SPFresh \cite{spfresh} handle updates for
disk-resident graphs; incremental insertion is native to navigable
graphs.

\paragraph{Partitioned inverted designs.} SPANN \cite{spann} is the
closest system by architecture: a vocabulary of actual data points at a
scale of $n/10$ or larger, an ANN index over the vocabulary in RAM,
posting lists on disk, multi-assignment, and disk-resident serving at
billion scale. \misi{} differs by sampling instead of clustering, by
$\kb = 64$ assignments instead of $\le 8$, and by an idf-weighted vote
instead of union-and-rerank. A controlled comparison is the most
important experiment this paper does not contain; it is in progress for
the full version, and no claim here should be read as standing in for
it. FAISS's IVF with an HNSW quantizer realizes the same
coarse-quantizer-with-ANN-vocabulary idea at smaller vocabularies.

\paragraph{Shared neighbors.} Shared-nearest-neighbor similarity
originates with Jarvis and Patrick \cite{jarvis1973} and was developed
for clustering \cite{ertoz2003}; \misi{} uses it as the retrieval signal
itself. The merge is the $t$-occurrence problem of approximate string
joins \cite{li2008divideskip}, whose skipping algorithms are discussed
in \S\ref{sec:merge}.

\section{The index}\label{sec:index}

\paragraph{Structure.} Draw $S \subseteq D$, $|S| = m = \alpha n$,
uniformly at random (\S\ref{sec:coverage} treats non-uniform selection).
Build a high-recall inner index $I_S$ over $S$; freeze it. For every
object $o$, search $I_S$ for the $\kb$ nearest sample points of $o$ ---
its \emph{signature} $N_S(o)$, stored with build ranks. Invert: the
posting list $L(s)$ holds every object with $s$ in its signature,
sorted by (rank, id). Mean list length is exactly $\rho = \kb/\alpha$
(there are $\kb n$ postings over $\alpha n$ lists), independent of $n$.
The signatures double as a forward file. Construction is $n$ independent
inner searches plus a linear scatter: embarrassingly parallel,
deterministic at any thread count, and identical across the in-memory,
streaming, and factory build modes (enforced by test; the streaming mode
is measured in \S\ref{sec:experiments}).

\paragraph{Query.} Given $q$ and target $k$: (1) search $I_S$ for the
$\ks$ nearest sample points of $q$ ($\ks$ is a query-time knob,
decoupled from $\kb$); (2) merge the $\ks$ posting lists, scoring each
candidate by its idf-weighted shared-neighbor count; (3) verify the
top-$C$ candidates with true distances and return the best $k$. The two
query dials are $\ks$ (how many lists) and $C$ (how many true
distances); \S\ref{sec:experiments} shows which one to move for a given
recall target.

\paragraph{Residency.} Serving requires in RAM: the sample vectors, the
inner index, and the offset table. Postings and database vectors may be
memory-mapped from disk; the merge then uses an
$O(\ks\rho)$-transient accumulator rather than a dense scoreboard, and
the vector mapping must be advised for random access (without
\texttt{MADV\_RANDOM}, fault-around read-amplifies the $384$-byte
verification reads by more than two orders of magnitude). The measured
budget floor of the current implementation is set by the inner-index
build at load time, not by the sample (Table~\ref{tab:memory}).

\paragraph{Updates (design, not measured).} Because the sample is a
frozen statistical summary, an insert is one inner search plus $\kb$
posting appends, and a delete is an exact posting removal via the
forward file. None of this is implemented or benchmarked here ---
the rank-sorted list layout conflicts with cheap appends and would need
a bucketed layout first --- so this paper claims only the design
property and defers measurement. Navigable graphs insert incrementally
as a matter of course; the comparison, when made, must be against
\cite{freshdiskann,spfresh} and \texttt{hnswlib}'s native inserts.

%

\section{A model of the vote}\label{sec:model}

The design has two dials --- resolution $\rho = \kb/\alpha$ and confidence
$\kb$ --- and one architectural commitment: candidates are ranked by a
shared-neighbor vote and resolved by verification. This section proves what
the vote can and cannot do. The positive result
(Theorem~\ref{thm:vote}) says the vote separates candidates whose
neighborhood overlap with the query differs by $\Delta$, at confidence
$\kb = O(\log(kn/\delta)/\Delta^2)$, so vote-then-verify succeeds with a
budget governed by a single instance quantity: the number of database
objects whose overlap with the query exceeds a threshold. The negative
result (Lemma~\ref{lem:limit}) says the vote resolves overlap differences
of order $1/\sqrt{\kb}$ and no finer --- the provable form of what our
earlier experiments observed as a ``resolution barrier,'' now scoped to
what is actually established. Both statements are about the
\emph{information in the stored evidence}; neither claims optimality among
all conceivable uses of it (\S\ref{sec:evidence-scope}).

\subsection{Setting}

$D$ is the database, $|D| = n$. For $x \in D \cup \{q\}$ and integer
$\rho$, let $N_\rho(x) \subseteq D$ be the $\rho$ nearest objects to $x$
(ties broken by identifier). $S \subseteq D$ is a Bernoulli sample: each
object enters $S$ independently with probability $\alpha$. Define the
\emph{$\rho$-trace} signature
\[
\tilde N_S(x) \;=\; S \cap N_\rho(x),
\qquad |\tilde N_S(x)| \sim \mathrm{Bin}(\rho, \alpha),\quad
\mathbb{E}|\tilde N_S(x)| = \alpha\rho = \kb .
\]
The implemented signature truncates at exactly $\kb$ sample points rather
than at radius rank $\rho$; Remark~\ref{rem:truncation} transfers the
results between the two at an additive $e^{-\Omega(\kb\varepsilon^2)}$.

The \emph{overlap} of two objects is the normalized intersection of their
$\rho$-neighborhoods,
\[
\omega(q,o) \;=\; \frac{|N_\rho(q) \cap N_\rho(o)|}{\rho} \;\in\; [0,1],
\]
and the \emph{raw vote} is the sampled intersection,
\[
X(q,o) \;=\; \bigl|S \cap N_\rho(q) \cap N_\rho(o)\bigr|,
\qquad \mathbb{E}\,X(q,o) = \kb\,\omega(q,o).
\]
Two remarks before the mathematics. First, $X/\kb$ is an unbiased
estimator of $\omega$: \emph{the index measures its own hardness}, a fact
used below to make the theorem's instance quantity observable. Second,
every statement in this section conditions on the neighborhoods
$N_\rho(\cdot)$ --- the geometry is arbitrary; only the sample is random.
This is what makes the analysis metric-free: nothing about the distance
enters except through the sets $N_\rho$.

\subsection{Separation}

\begin{lemma}[vote separation]\label{lem:sep}
Let $o, o'$ satisfy $\omega(q,o) \ge \omega(q,o') + \Delta$ with
$\Delta > 0$. Then
\[
\Pr\bigl[X(q,o') \ge X(q,o)\bigr]
\;\le\;
\exp\!\Bigl(-\,\frac{\kb\,\Delta^2}{2\,(1 + \Delta/3)}\Bigr).
\]
\end{lemma}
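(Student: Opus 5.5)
The plan is to reduce the event $X(q,o') \ge X(q,o)$ to a lower-tail deviation of a sum of independent bounded terms, and then apply Bernstein's inequality. The geometry is fixed; only the Bernoulli sample $S$ is random.

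First, cancel the common evidence. Write $P = N_\rho(q)$ and define
\[
A = P \cap N_\rho(o) \setminus N_\rho(o'), \qquad
B = P \cap N_\rho(o') \setminus N_\rho(o).
\]
Objects in $P \cap N_\rho(o) \cap N_\rho(o')$ contribute equally to both votes. Hence
\[
X(q,o) - X(q,o') \;=\; Y \;:=\; \sum_{a\in A} Z_a - \sum_{b \in B} Z_b,
\]
where $Z_x = \mathbf{1}[x\in S]$ are i.i.d.\ $\mathrm{Bernoulli}(\alpha)$.

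Two facts about $A$ and $B$ do all the work.

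\begin{itemize}
\item Since $|P\cap N_\rho(o)| - |P \cap N_\rho(o')| = |A| - |B|$, we have $\omega(q,o)-\omega(q,o') = (|A|-|B|)/\rho \ge \Delta$. Therefore $\mathbb{E}\,Y = \alpha(|A|-|B|) \ge \alpha\rho\Delta = \kb\Delta$.
\item $A$ and $B$ are disjoint subsets of $P$, so $|A|+|B| \le \rho$.
\end{itemize}

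The event $X(q,o')\ge X(q,o)$ is $\{Y \le 0\}$. This is contained in $\{Y - \mathbb{E}Y \le -t\}$ with $t = \kb\Delta$, because $\mathbb{E}Y \ge t$.

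Apply Bernstein's inequality to the centered summands $\pm(Z_x-\alpha)$. They are independent and bounded by $1$ in absolute value. Their total variance is
\[
V = \alpha(1-\alpha)(|A|+|B|) \le \alpha\rho = \kb .
\]
Bernstein then gives
\[
\Pr[Y-\mathbb{E}Y\le -t] \le \exp\!\Bigl(-\frac{t^2}{2(V+t/3)}\Bigr) \le \exp\!\Bigl(-\frac{\kb^2\Delta^2}{2(\kb+\kb\Delta/3)}\Bigr),
\]
and the right-hand side simplifies to the stated bound.

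The main point to get right is the variance bound $V\le\kb$. It relies on $A$ and $B$ both lying inside the single set $N_\rho(q)$. Naively bounding each vote separately would give $2\kb$ and lose a factor of $2$ in the exponent, which is why the cancellation step comes first. Everything else is standard.
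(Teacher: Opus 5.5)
Your proof is correct and follows the paper's argument step for step: the same disjoint sets $A$ and $B$ inside $N_\rho(q)$, the mean gap $\alpha(|A|-|B|)\ge\kb\Delta$, the variance bound $\alpha(|A|+|B|)\le\alpha\rho=\kb$, and Bernstein's lower tail at $t=\kb\Delta$. Your bound of $1$ on the centered summands and the variance $\alpha(1-\alpha)(|A|+|B|)$ fill in details the paper leaves implicit; they change nothing of substance.
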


\begin{proof}
Write $A = N_\rho(q) \cap N_\rho(o) \setminus N_\rho(o')$ and
$B = N_\rho(q) \cap N_\rho(o') \setminus N_\rho(o)$; $A$ and $B$ are
disjoint subsets of $N_\rho(q)$, so $|A| + |B| \le \rho$. With
$B_u \sim \mathrm{Bernoulli}(\alpha)$ i.i.d.\ over $u$,
\[
Z \;:=\; X(q,o) - X(q,o')
\;=\; \sum_{u \in A} B_u \;-\; \sum_{u \in B} B_u ,
\]
a sum of $|A| + |B|$ independent summands with values in $\{-1, 0, 1\}$,
mean $\mathbb{E}Z = \alpha(|A| - |B|) = \kb\bigl(\omega(q,o) -
\omega(q,o')\bigr) \ge \kb\Delta$, and total variance
$\sum \mathrm{Var} \le \alpha(|A| + |B|) \le \alpha\rho = \kb$. Bernstein's
inequality for the lower tail at $t = \kb\Delta$ gives
$\Pr[Z \le 0] \le \Pr[Z - \mathbb{E}Z \le -t] \le
\exp(-t^2 / (2(\kb + t/3)))$, which is the claim.
\end{proof}

\begin{theorem}[vote-then-verify]\label{thm:vote}
Fix a query $q$, a target $k$, and a threshold $\omega^- \in [0,1)$.
Define the \emph{confusable set}
\[
K(q, \omega^-) \;=\; \{\, o \in D : \omega(q,o) > \omega^- \,\}
\]
and suppose every true $k$-nearest neighbor $o^*$ of $q$ satisfies
$\omega(q,o^*) \ge \omega^- + \Delta$. If
\[
\kb \;\ge\; \frac{2\,(1 + \Delta/3)}{\Delta^2}\,
\ln\!\frac{k\,n}{\delta},
\]
then with probability at least $1 - \delta$ every true $k$-nearest
neighbor is among the top $|K(q,\omega^-)| + k$ candidates in vote order.
Consequently vote-then-verify with budget $C = |K(q,\omega^-)| + k$
returns the exact $k$ nearest neighbors with probability $\ge 1 - \delta$.
\end{theorem}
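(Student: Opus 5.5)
The argument is a union bound over pairs, built on Lemma~\ref{lem:sep}. Fix a true $k$-nearest neighbor $o^*$ and any object $o' \notin K(q,\omega^-)$, that is, with $\omega(q,o') \le \omega^-$. By hypothesis $\omega(q,o^*) \ge \omega^- + \Delta \ge \omega(q,o') + \Delta$, so Lemma~\ref{lem:sep} applies to the pair $(o^*, o')$ with gap $\Delta$:
\[
\Pr\bigl[X(q,o') \ge X(q,o^*)\bigr] \le \exp\!\Bigl(-\frac{\kb\Delta^2}{2(1+\Delta/3)}\Bigr) \le \frac{\delta}{kn},
\]
where the last inequality is exactly the assumed lower bound on $\kb$. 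There are at most $k$ choices of $o^*$ and at most $n$ choices of $o'$, so a union bound over all these pairs gives total failure probability at most $\delta$. On the complementary event, every true neighbor strictly outvotes every non-confusable object.

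Next I convert this event into a rank statement. Fix a true neighbor $o^*$. Every object placed at or above $o^*$ in vote order must have vote at least $X(q,o^*)$. On the good event, no such object lies outside $K(q,\omega^-)$. The objects preceding $o^*$ therefore come only from $K(q,\omega^-) \cup \{\text{true neighbors}\}$, a set of size at most $|K(q,\omega^-)| + k$. Hence $o^*$ occupies a position at most $|K(q,\omega^-)| + k$, so all $k$ true neighbors lie within the top $C = |K(q,\omega^-)|+k$ candidates.

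Strict inequality is what makes ties harmless. Ties within $K$, or ties broken by identifier, only reorder objects that are already counted in the budget. The bound is slightly loose, since true neighbors themselves belong to $K$ whenever $\omega^- + \Delta > \omega^-$. That looseness does not matter, because the statement only claims $C = |K|+k$.

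For the consequence, verification computes the true distances of the $C$ candidates and returns the best $k$. Since all true $k$-nearest neighbors are among the candidates, with ties broken by identifier consistently with the definition, the output is exactly the true $k$-NN set.

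The main obstacle is bookkeeping rather than probability. One subtlety is that objects with zero vote may not appear in the merge at all. If $X(q,o^*) = 0$, though, the good event forces every non-confusable object to have negative vote, which is impossible. So each true neighbor has a positive vote and appears in the merge.

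A second subtlety is that the implemented vote is idf-weighted and uses truncated signatures, whereas the theorem concerns the raw $\rho$-trace vote $X$. I would state explicitly that the theorem is about $X$ and defer the truncation to Remark~\ref{rem:truncation}.
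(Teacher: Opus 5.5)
Your proposal is correct and matches the paper's proof step for step. Both apply Lemma~\ref{lem:sep} to each pair of a true neighbor and a non-confusable object at gap $\Delta$, use the choice of $\kb$ to get $\delta/(kn)$ per pair, take a union bound over at most $kn$ pairs, and conclude that on the good event only members of $K(q,\omega^-)$ and other true neighbors can rank at or above a true neighbor, so each true neighbor's rank is at most $|K(q,\omega^-)|+k$. Your extra remarks on ties, idf weighting and truncation are consistent with the paper; the only nit is that your zero-vote argument needs $D \setminus K(q,\omega^-) \neq \emptyset$, which fails only in the degenerate case $K(q,\omega^-) = D$, where $C \ge n$ anyway.
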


\begin{proof}
Fix a true neighbor $o^*$ and any $o \notin K(q,\omega^-)$; their overlaps
differ by at least $\Delta$, so by Lemma~\ref{lem:sep} the event
$X(q,o) \ge X(q,o^*)$ (which covers ties, hence any tie-breaking rule) has
probability at most $\exp(-\kb\Delta^2/(2(1+\Delta/3))) \le \delta/(kn)$
by the choice of $\kb$. A union bound over the $k$ true neighbors and the
at most $n$ objects outside $K$ leaves, with probability $\ge 1-\delta$,
no object outside $K$ ranked at or above any true neighbor. On that event
each true neighbor is outranked only by members of $K$ and by other true
neighbors: its vote rank is at most $|K| + k$. Verification computes true
distances for the top $C = |K| + k$ and returns the exact $k$-NN.
\end{proof}

Three comments. (i) The budget is an \emph{instance} quantity: $C$
succeeds precisely when it covers the confusable set, and
$|K(q,\omega^-)|$ varies per query --- the empirical spread of adaptive
budgets (Fig.~\ref{fig:hardness}) is this quantity made visible. (ii) The
theorem prescribes $\kb = \Theta(\log(n)/\Delta^2)$ for fixed failure
probability --- logarithmic confidence, with the constant set by the
overlap gap $\Delta$, a property of the data at resolution $\rho$. (iii)
Nothing was assumed about the metric: expansion or doubling conditions
would be needed only to bound $|K(q,\omega^-)|$ a priori, which we do not
attempt; we \emph{measure} it instead (via $X/\kb$), and
\S\ref{sec:experiments} reports the measured budgets.

\subsection{The resolution limit}

The converse question: which overlap differences can the vote
\emph{not} see?

\begin{assumption}[generic overlap]\label{as:generic}
For the pair $(o, o')$ under comparison there is a constant
$\beta \in (0,1]$ with
$|N_\rho(q) \cap (N_\rho(o) \,\triangle\, N_\rho(o'))| \ge \beta\rho$:
the two candidates disagree on a constant fraction of the query's
$\rho$-neighborhood. In high intrinsic dimension this is the typical
case; it fails only when $o$ and $o'$ keep nearly identical company, in
which case the vote (correctly) cannot and need not distinguish them.
\end{assumption}

\begin{lemma}[resolution limit of the vote]\label{lem:limit}
Under Assumption~\ref{as:generic}, there are constants $c_0, c_1 > 0$
(depending only on $\beta$ and $\alpha \le 1/2$) such that if
\[
0 \;\le\; \omega(q,o) - \omega(q,o') \;\le\; \frac{c}{\sqrt{\kb}},
\]
then
\[
\Pr\bigl[X(q,o') \ge X(q,o)\bigr] \;\ge\; \frac12 - c_0\,c -
\frac{c_1}{\sqrt{\kb}} .
\]
The vote misranks the pair with probability bounded away from zero: an
overlap advantage of order $1/\sqrt{\kb}$ is invisible to it.
\end{lemma}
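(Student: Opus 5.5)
We reuse the decomposition from the proof of Lemma~\ref{lem:sep}. Set $A = N_\rho(q)\cap N_\rho(o)\setminus N_\rho(o')$ and $B = N_\rho(q)\cap N_\rho(o')\setminus N_\rho(o)$. Then
\[
Z = X(q,o)-X(q,o') = \sum_{u\in A}B_u - \sum_{u\in B}B_u ,
\]
a sum of $N := |A|+|B|$ independent $\{-1,0,1\}$-valued summands. Assumption~\ref{as:generic} says exactly that $N \ge \beta\rho$, because $A \cup B = N_\rho(q)\cap(N_\rho(o)\triangle N_\rho(o'))$. The mean is $\mu = \alpha(|A|-|B|) = \kb(\omega(q,o)-\omega(q,o')) \in [0, c\sqrt{\kb}]$. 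The variance is $\sigma^2 = N\alpha(1-\alpha)$. Using $N\ge\beta\rho$, $\alpha\rho=\kb$ and $1-\alpha\ge 1/2$ (this is where $\alpha\le 1/2$ enters), we get
\[
\sigma^2 \ge \tfrac12\beta\kb .
\]
Since also $N \le \rho$, we have $\sigma^2 \le \kb$. The event to lower-bound is $\{X(q,o')\ge X(q,o)\} = \{Z\le 0\}$.

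The key step is a Berry--Esseen approximation for this non-identically distributed sum. Each centered summand has absolute third moment at most $\mathbb{E}|B_u-\alpha|^3 \le \alpha(1-\alpha)$, so the Lyapunov ratio satisfies
\[
\frac{\sum \mathbb{E}|\xi_u|^3}{\sigma^3} \le \frac{N\alpha(1-\alpha)}{\sigma^3} = \frac{1}{\sigma} \le \sqrt{\frac{2}{\beta\kb}} .
\]
Berry--Esseen then gives
\[
\Pr[Z\le 0] \ge \Phi(-\mu/\sigma) - \frac{C_{\mathrm{BE}}\sqrt{2/\beta}}{\sqrt{\kb}} .
\]
For the main term, note that $\mu/\sigma \le c\sqrt{\kb}\,/\sqrt{\beta\kb/2} = c\sqrt{2/\beta}$. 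Because $\Phi$ has density at most $1/\sqrt{2\pi}$,
\[
\Phi(-\mu/\sigma) \ge \tfrac12 - \frac{c\sqrt{2/\beta}}{\sqrt{2\pi}} = \tfrac12 - \frac{c}{\sqrt{\pi\beta}} .
\]
This yields $c_0 = 1/\sqrt{\pi\beta}$ and $c_1 = C_{\mathrm{BE}}\sqrt{2/\beta}$, both depending only on $\beta$, as claimed.

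The step needing care is the lattice issue. $Z$ is integer-valued, so the atom $\Pr[Z=0]$ should not be lost, but the claim only asks for $\Pr[Z\le 0]$, which is the CDF at $0$. Berry--Esseen bounds $|\Pr[Z\le x]-\Phi((x-\mu)/\sigma)|$ uniformly in $x$, including at lattice points, so evaluating at $x=0$ is legitimate and no continuity correction is needed. It then remains to confirm that the non-i.i.d.\ Berry--Esseen theorem (Esseen's version with an absolute constant) applies to independent bounded summands with the mixed signs $\pm B_u$. Centering takes care of this: the third-moment computation is identical for $+B_u$ and $-B_u$.

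Finally, the degenerate regimes. If $\kb$ is small, the right-hand side is negative and the statement is trivially true. If $\alpha$ is close to $1/2$, the variance bound above still holds. No dependence on $n$ or on the geometry arises: exactly as in Lemma~\ref{lem:sep}, the argument conditions on the neighborhoods, so only the sample is random.
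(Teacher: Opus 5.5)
Your proof is correct and follows the paper's argument: the same $A$/$B$ decomposition of $Z$, Berry--Esseen evaluated at $x=0$, and the bound $\Phi(-x)\ge\tfrac12 - x/\sqrt{2\pi}$ applied at $x=\mu/\sigma$. The only differences are that you make the third-moment ratio $\gamma\le 1/\sigma$ explicit, and that you use $1-\alpha\ge\tfrac12$ to obtain the $\alpha$-free constant $c_0=1/\sqrt{\pi\beta}$. The paper instead keeps the slightly sharper $c_0=1/\sqrt{2\pi(1-\alpha)\beta}$.
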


\begin{proof}
$Z = X(q,o) - X(q,o')$ is, as in Lemma~\ref{lem:sep}, a sum of
$N = |A| + |B| \ge \beta\rho$ independent bounded summands with mean
$\mu = \kb(\omega(q,o) - \omega(q,o')) \le c\sqrt{\kb}$ and variance
$\sigma^2 \ge \alpha(1-\alpha)\,\beta\rho = (1-\alpha)\beta\,\kb$. By the
Berry--Esseen theorem,
$\Pr[Z \le 0] \ge \Phi(-\mu/\sigma) - C_{\mathrm{BE}}\,\gamma$, where
$\gamma = O(1/\sigma) = O(1/\sqrt{\kb})$ bounds the normalized third
moments. Since $\mu/\sigma \le c/\sqrt{(1-\alpha)\beta}$ and
$\Phi(-x) \ge \tfrac12 - x/\sqrt{2\pi}$, the claim follows with
$c_0 = 1/\sqrt{2\pi(1-\alpha)\beta}$.
\end{proof}

\begin{corollary}[expected rank inside the band]\label{cor:band}
Fix the $k$-th true neighbor $o^*_k$ and let the \emph{band} be
\[
\mathcal{B} \;=\; \Bigl\{\, o \in D :
\omega(q,o) \ge \omega(q,o^*_k) - \tfrac{c}{\sqrt{\kb}} \,\Bigr\},
\]
with Assumption~\ref{as:generic} holding for each pair $(o^*_k, o)$,
$o \in \mathcal{B}$. Then by Lemma~\ref{lem:limit} and linearity of
expectation, the expected vote rank of $o^*_k$ is at least
$(\tfrac12 - c_0 c - c_1/\sqrt{\kb})\,|\mathcal{B}|$: on average a
constant fraction of the band outranks the $k$-th true neighbor, and a
verification budget that covers it must scale with $|\mathcal{B}|$.
Verification pays for the band.
\end{corollary}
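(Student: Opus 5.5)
The plan is to express the vote rank of $o^*_k$ as a sum of indicators and apply Lemma~\ref{lem:limit} to each term separately. Define the rank as one plus the number of objects that outscore it:
\[
R(o^*_k) \;\ge\; \sum_{o \in \mathcal{B}\setminus\{o^*_k\}} \mathbf{1}\bigl[X(q,o) \ge X(q,o^*_k)\bigr].
\]
The indicators use $\ge$ so that ties count against $o^*_k$. This matches the convention in the proof of Theorem~\ref{thm:vote}, where ties were treated as covered under any tie-breaking rule. If a specific tie-breaking rule is fixed instead, one would use the strict event and absorb the tie probability. That probability is $O(1/\sqrt{\kb})$ by the same Berry--Esseen estimate, since $Z$ is integer valued with $\sigma = \Theta(\sqrt{\kb})$, so it can be folded into $c_1$.

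By linearity of expectation, $\mathbb{E}\,R(o^*_k) \ge \sum_{o} \Pr[X(q,o) \ge X(q,o^*_k)]$. Each term is bounded below by applying Lemma~\ref{lem:limit} with the roles set as follows: $o^*_k$ plays ``$o$'' and the band member $o$ plays ``$o'$''. There are two cases for a band member $o$.

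\textbf{Case $\omega(q,o) \le \omega(q,o^*_k)$.} The band condition gives $0 \le \omega(q,o^*_k) - \omega(q,o) \le c/\sqrt{\kb}$. Assumption~\ref{as:generic} holds for the pair by hypothesis. Lemma~\ref{lem:limit} then directly gives $\Pr[X(q,o) \ge X(q,o^*_k)] \ge \tfrac12 - c_0 c - c_1/\sqrt{\kb}$.

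\textbf{Case $\omega(q,o) > \omega(q,o^*_k)$.} Here $o$ has the larger mean vote, and I would argue monotonicity: the probability that $Z = X(q,o^*_k) - X(q,o) \le 0$ is even larger. I would not rely on a stochastic-dominance claim, since the gap is unbounded above. Instead I would rerun the Berry--Esseen step from the proof of Lemma~\ref{lem:limit} with $\mu \le 0$. This gives $\Pr[Z \le 0] \ge \Phi(-\mu/\sigma) - C_{\mathrm{BE}}\gamma \ge \tfrac12 - c_1/\sqrt{\kb}$. That is already at least the required bound, because $c_0 c \ge 0$.

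Summing over the band gives $(\tfrac12 - c_0 c - c_1/\sqrt{\kb})(|\mathcal{B}|-1)$. The main obstacle is the off-by-one between this and the stated $|\mathcal{B}|$. I would handle it by counting $o^*_k$ itself as rank $1$, which contributes $1 \ge \tfrac12 - c_0 c - c_1/\sqrt{\kb}$, so the total reaches the stated form. The remaining point to check is that the second case genuinely reuses the constants of Lemma~\ref{lem:limit}. This holds because the variance lower bound $\sigma^2 \ge (1-\alpha)\beta\kb$ uses only Assumption~\ref{as:generic}, not the sign of $\mu$.

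The budget consequence then follows. If verification with budget $C$ is to contain $o^*_k$ in expectation, $C$ must be at least a constant times $|\mathcal{B}|$ whenever $c$ is small enough that $\tfrac12 - c_0 c$ is positive.
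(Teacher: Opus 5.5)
Your proposal is correct and follows the paper's own argument: write the rank of $o^*_k$ as a sum of indicators over the band, bound each term by Lemma~\ref{lem:limit}, and sum by linearity. It also repairs two points the paper's one line skips: band members with $\omega(q,o) > \omega(q,o^*_k)$ lie outside the hypothesis of Lemma~\ref{lem:limit}, and rerunning Berry--Esseen with $\mu \le 0$ is a valid fix; and $o^*_k \in \mathcal{B}$ creates the off-by-one, which you correctly absorb with the leading $1$.

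Two small asides. First, your stated reason for avoiding a dominance argument in that case does not hold. Split the larger binomial into an independent copy of the smaller one plus a nonnegative remainder; symmetry then gives $\Pr[Z \le 0] \ge \tfrac12$ exactly, however large the gap. Second, the closing budget claim is better justified by a reverse-Markov step than by the phrase ``contain $o^*_k$ in expectation.'' The number of outranking band members is at most $|\mathcal{B}|-1$, so it exceeds $\tfrac{p}{2}(|\mathcal{B}|-1)$ with probability at least $p/2$, where $p = \tfrac12 - c_0 c - c_1/\sqrt{\kb}$.
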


\subsection{What this does and does not establish}\label{sec:evidence-scope}

Theorem~\ref{thm:vote} and Lemma~\ref{lem:limit} bracket the \emph{raw
vote}: it resolves overlap gaps above $\Theta(1/\sqrt{\kb})$ and not below.
They do not assert that no other processing of the stored evidence ---
ranks, idf weights, quantized posting distances --- can do better; that is
an empirical question, and our experiments so far bear only on the
families we tried (\S\ref{sec:negative}): statistical re-weightings,
navigation over the implicit graph, and triangle-inequality bounds, each
of which failed to move the recall/cost frontier on SIFT, GloVe and Deep.
Two designed families remain untested: a learned re-ranker over the stored
features, and distance \emph{estimation} (rather than bounding) from
personal pivots. We therefore state the empirical finding with its scope
--- \emph{on these datasets, no query-time refinement we tried beats
vote-then-verify} --- and reserve the word ``barrier'' for
Lemma~\ref{lem:limit}, which is what is actually proved.

\begin{remark}[no bound without a stored scale]\label{rem:noscale}
One refinement can be ruled out outright: a distance \emph{bound} on
$d(q,o)$ computed from the stored evidence plus the query's known
distances to its terms. The index stores identities and ranks --- data
that is ordinal in $o$ --- and ordinal data carries no scale. Concretely,
for any candidate $o \notin S$ replace it by $o'$ with
$d(o',x) = D + \tfrac12\,d(o,x)$ for large $D$: this is again a metric
($x \mapsto d(o,x)/2$ is $\tfrac12$-Lipschitz, so the triangle
inequality survives), it preserves the order of $o$'s distances --- hence
$o'$ has the same signature, the same build ranks, the same shared terms,
and every other object's signature is untouched --- while
$d(q,o') = D + d(q,o)/2$ is arbitrary. No function of what the index
stores and the query knows can bound $d(q,o)$.

The impossibility also names its own minimal repair: any per-object scale
anchor restores bounds, and the cheapest is the \emph{personal radius}
$\rho_o = d(o, s_{\kb})$, the distance to $o$'s last signature slot ---
one quantized byte per object, available for free at build time. Since
$d(o,s) \le \rho_o$ for every shared term and $d(o,s) \ge \rho_o$ for
every sample point absent from $o$'s signature,
\[
d(q,o) \;\le\; \min_{s\ \mathrm{shared}} d(q,s) + \rho_o,
\qquad
d(q,o) \;\ge\; \max\Bigl(\;\max_{s\ \mathrm{shared}} d(q,s) - \rho_o,\;\;
\rho_o - \min_{s\ \mathrm{absent}} d(q,s)\Bigr),
\]
where ``absent'' ranges over the query's terms not shared with $o$. The
second lower bound uses \emph{absence} as evidence: a high-vote candidate
that misses the query's nearest terms is certifiably far.
(Quantization rounds $\rho_o$ down for the lower bounds and up for the
upper.) Figure~\ref{fig:bounds} illustrates both halves. Whether these
$\rho$-scale bounds prune anything on high-LID data is an empirical
question, answered in \S\ref{sec:experiments}.
\end{remark}

\begin{figure}[t]\centering
\includegraphics[width=\linewidth]{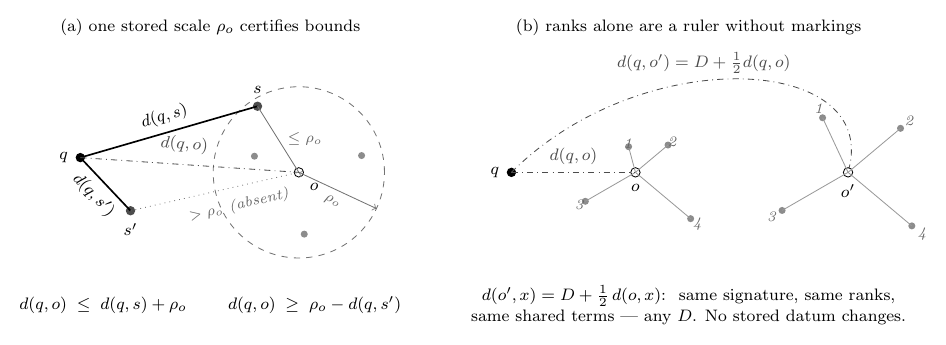}
\caption{The two halves of Remark~\ref{rem:noscale}. \emph{(a)} With one
stored scale --- the personal radius $\rho_o$, the distance from $o$ to
its last signature slot --- every shared term $s$ certifies
$d(q,o) \le d(q,s) + \rho_o$, and every term of $q$ \emph{absent} from
$o$'s signature lies outside the ball, certifying
$d(q,o) \ge \rho_o - d(q,s')$: a high-vote candidate that misses the
query's nearest terms is provably far. \emph{(b)} Without the scale no
bound exists: replacing $o$ by $o'$ with $d(o',x) = D + \tfrac12\,d(o,x)$
preserves the order of $o$'s distances --- the same signature, the same
build ranks, the same shared terms, with every other signature untouched
--- while $d(q,o')$ grows with $D$ without limit. Ranks are a ruler
without markings; the sketch is schematic since the construction is
metric, not Euclidean.}
\label{fig:bounds}
\end{figure}

\begin{remark}[truncated signatures]\label{rem:truncation}
The implementation stores the $\kb$ nearest sample points of $x$, i.e.\
$S \cap N_{R(x)}(x)$ where $R(x)$ is the neighbor rank of the $\kb$-th
nearest sample point. $R(x)$ is a negative-binomial stopping rank with
$\Pr[|R(x)/\rho - 1| > \varepsilon] \le 2e^{-\kb\varepsilon^2/8}$ for
$\varepsilon \le 1/2$ by multiplicative Chernoff bounds, so every statement above transfers to the truncated
signature with $\omega$ perturbed by $O(\varepsilon)$ and an additive
$e^{-\Omega(\kb\varepsilon^2)}$ in the failure probability.
\end{remark}

\begin{remark}[weighted votes]\label{rem:idf}
For a weighted vote $X_w(q,o) = \sum_u w_u B_u\,\mathbf{1}[u \in
N_\rho(q) \cap N_\rho(o)]$ with weights $w_u \in [w_{\min}, w_{\max}]$
fixed by the index (idf), Lemma~\ref{lem:sep} holds with $\kb\Delta$
replaced by the weighted mean gap and the same Bernstein argument,
degrading by the ratio $w_{\max}/w_{\min}$. Whether idf \emph{improves}
the effective gap $\Delta$ is a property of the data (it does on hub-heavy
data, \S\ref{sec:experiments}); the guarantee form is unchanged.
\end{remark}

\section{What query-time processing does not add}\label{sec:negative}

Five families of query-time refinements were implemented and measured.
All five failed to improve the recall/cost frontier of the static
$(\ks, C)$ configuration. Scope: these are measurements on SIFT1M,
GloVe-200 and Deep prefixes --- dense vectors of moderate-to-high
intrinsic dimension --- with the stated configurations; none is a
theorem, and two designed families (a learned re-ranker over stored
features; distance estimation from personal pivots) remain untested.

\begin{observation}[statistical]\label{obs:stat}
Down-weighting votes by build/query rank agreement loses recall in
proportion to the decay's aggressiveness and converges to idf from below
as it softens; normalizing votes by the hubness of a candidate's
signature also loses. Restricting the \emph{scan} by rank
(\texttt{max\_rank}) trades recall for merge cost smoothly and is kept
as a cost knob.
\end{observation}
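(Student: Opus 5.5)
Since Observation~\ref{obs:stat} is an empirical statement, I would establish it by a controlled measurement, with a short analytic explanation drawn from \S\ref{sec:model}. \textbf{The measurement comes first.} Fix the static configuration on SIFT1M, GloVe-200 and the Deep prefixes: idf-weighted vote, fixed $\kb$, $\alpha$ and seeds. Then sweep each refinement as a one-parameter family while holding $\ks$ and $C$ on the same grid as the baseline sweep. Comparing at equal $(\ks, C)$ equalizes evidence read and verification cost, so any difference in recall is attributable to the weighting alone.

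The families are as follows.
\begin{enumerate}
\item Rank-agreement decay: the weight of a shared term $s$ is $w_s = \mathrm{idf}(s)\,\phi_\lambda(|r_o(s) - r_q(s)|)$, with $\phi_\lambda$ decaying at rate $\lambda$.
\item Hubness normalization: the score of $o$ is divided by a function of the summed list lengths of $N_S(o)$.
\item The \texttt{max\_rank} scan restriction.
\end{enumerate}
For each family I would plot the recall/cost frontier against the baseline frontier. I would then check three things: recall is monotonically worse as $\lambda$ grows; the decay curve approaches the idf curve from below as $\lambda \to 0$, which holds by construction since $\phi_0 \equiv 1$; and \texttt{max\_rank} traces a smooth recall-versus-merge-cost curve.

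\textbf{Then the explanation.} I would ground the result in Remark~\ref{rem:idf}. A weighted vote keeps the separation form of Lemma~\ref{lem:sep}, but with the gap measured in weighted mean and a degradation factor $w_{\max}/w_{\min}$. Rank decay enlarges this ratio, and it shrinks the effective number of contributing terms, acting like a smaller $\kb$. Unless the weights correlate with membership in $A$ versus $B$ (in the notation of the proof of Lemma~\ref{lem:sep}) more strongly than uniform weights do, the weighted gap does not increase while the variance proxy does, so separation weakens.

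Concretely, I would estimate from data the weighted gap over true-neighbor versus band pairs, $\mathbb{E}_w Z / \sqrt{\mathrm{Var}_w Z}$, using the unbiased overlap estimate $X/\kb$. I expect this signal-to-noise ratio to fall as $\lambda$ grows, which would explain the monotone recall loss. The same diagnostic applied to hubness normalization should show that it down-weights true neighbors that live in dense regions. Those neighbors are exactly the ones whose signatures contain hubs, so the normalization penalizes them disproportionately.

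\textbf{Main obstacle.} The hardest part is making ``converges to idf from below'' robust rather than an artifact of noise near $\lambda = 0$. I would attempt this first by running multiple sample seeds and reporting seed-level confidence bands. Showing that the gap to idf is of constant sign across all seeds for every swept $\lambda$ would support the ``from below'' claim.
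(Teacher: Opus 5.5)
Your measurement protocol is the paper's approach. Observation~\ref{obs:stat} is an empirical report, backed only by runs on SIFT1M, GloVe-200 and the Deep prefixes, with the grids in the run manifests. Sweeping each family at fixed $(\ks, C)$ against the static idf frontier, and plotting \texttt{max\_rank} against merge cost, is exactly what supports it. You are also right that only convergence to idf holds by construction, and that ``from below'' needs the seeds.

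The analytic grounding in Remark~\ref{rem:idf} does not go through, and you should drop it or label it conjecture.

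\textbf{The remark's hypothesis fails.} Remark~\ref{rem:idf} covers weights $w_u$ fixed per term by the index. Your rank-agreement weight $\phi_\lambda(|r_o(s)-r_q(s)|)$ depends on the candidate, and, through the signature rank $r_o(s)$, on the sample itself. Hubness normalization likewise rescales each candidate by its own factor. Consequently:
\begin{enumerate}
\item the terms in $N_\rho(q)\cap N_\rho(o)\cap N_\rho(o')$ no longer cancel in $Z = X_w(q,o)-X_w(q,o')$;
\item the mean of $Z$ is no longer a (weighted) overlap difference over $A$ and $B$;
\item the summands need not be independent.
\end{enumerate}
So the Lemma~\ref{lem:sep} form with a $w_{\max}/w_{\min}$ penalty is not available.

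\textbf{A weaker bound predicts no loss.} Even where Remark~\ref{rem:idf} does apply, a larger $w_{\max}/w_{\min}$ only weakens an \emph{upper} bound on the misranking probability. It predicts nothing about recall actually falling. A reweighting correlated with true-neighbor status could enlarge the effective gap, and rank agreement is designed to be exactly that. Your escape clause (``unless the weights correlate \dots'') is therefore the empirical question restated.

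The paper says as much in \S\ref{sec:evidence-scope}: the model brackets the raw vote only, and whether re-weightings help is left to measurement. The observation stands on your experiments alone. The signal-to-noise diagnostic is a reasonable descriptive add-on, but note that estimating $\mathrm{Var}_w Z$ needs many sample seeds or an exact computation from the $N_\rho$ sets, not one artifact. It is not an explanation the theory supplies.
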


\begin{observation}[navigational]\label{obs:nav}
The forward and inverted files define an implicit graph, materializable
on demand. Best-first expansion over it, interleaved with the merge,
contributed 5.1\% of final top-10 entries on GloVe while 1 candidate in
102 was unreachable through the merge ranking --- a real but marginal
contribution that did not move the frontier at equal cost.
\end{observation}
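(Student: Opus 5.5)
This observation is an empirical measurement, not a theorem, so the "proof" is an experimental protocol plus an accounting argument. I would establish it in three steps.

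\textbf{Materialize the implicit graph.} The neighbors of $o$ are the objects reachable by walking from $o$ through the forward file to its signature terms $N_S(o)$, and then through the inverted file to the posting lists $L(s)$ of those terms. Edges are generated lazily, so no graph is stored. Expansion is best-first by true distance to $q$. It is seeded with the current top candidates of the idf-weighted merge and interleaved with it, and all distance evaluations share one verification budget $C$. This makes the comparison equal-cost by construction: every true distance computed by the navigator is one fewer verification of a merge candidate.

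\textbf{Record two attribution statistics per query on GloVe-200.}
\begin{itemize}
\item[(a)] The fraction of final top-10 entries first discovered by expansion rather than by the merge ranking. The claim is that this fraction is $5.1\%$.
\item[(b)] The fraction of merge candidates whose vote rank lies beyond any feasible $C$, i.e.\ that are unreachable through the merge ranking alone. The claim is that this fraction is $1/102$.
\end{itemize}
Both statistics are computed from run manifests with fixed seeds, so they are reproducible.

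\textbf{Show the frontier does not move.} I would sweep $(\ks, C)$ for the static vote-then-verify baseline and for the interleaved variant. I would then plot recall against total true distances plus postings read, and check that the navigator's curve does not dominate the baseline's. The expected explanation follows from Corollary~\ref{cor:band}. The entries that navigation recovers are band members: objects whose overlap with $q$ lies within $c/\sqrt{\kb}$ of the $k$-th neighbor's, which the vote misorders by Lemma~\ref{lem:limit}. Recovering them costs the same distance evaluations that simply enlarging $C$ would spend, so the gain is real but frontier-neutral.

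\textbf{Main obstacle.} The hard part is making the equal-cost accounting fair. Graph expansion touches forward-file entries and posting lists that the merge does not, so I would charge both distance evaluations and postings read. I would also check that the conclusion does not depend on how the two costs are weighted.
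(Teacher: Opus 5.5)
The paper gives no argument for this observation beyond the measurement itself, with numbers generated from run manifests. So a measurement protocol like your first two steps is the only support available, and it is consistent with what is reported. The step that would fail is the explanation in your third step. Lemma~\ref{lem:limit} and Corollary~\ref{cor:band} are statements about the raw vote $X(q,o)$ and the vote-then-verify pipeline. They say the vote misorders pairs whose overlaps differ by less than $c/\sqrt{\kb}$, so a budget that follows vote order must pay for the band. Best-first expansion admits and orders objects by true distance, so the vote's resolution limit says nothing about what the navigator finds or what it costs. \S\ref{sec:evidence-scope} says exactly this: it lists navigation over the implicit graph among the families whose failure is an empirical finding, not a consequence of the model. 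Your sentence that navigation's recoveries ``cost the same distance evaluations that simply enlarging $C$ would spend'' is therefore asserted, not derived. It is also contradicted by your own statistic (b): an object the merge ranking cannot reach is not recovered at any $C$, so for those entries there is no enlarged-$C$ cost to be equal to.

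Your definition of (b) also picks the wrong population. ``Merge candidates whose vote rank lies beyond any feasible $C$'' are still reachable through the merge ranking, only expensively. The paper does not define the statistic, but the phrase fits objects that lie in none of the query's $\ks$ posting lists and so receive no vote at all. The implicit graph reaches them through a candidate's other signature terms, and that is what makes the contribution ``real.'' The claims that it is ``marginal'' and ``did not move the frontier'' then rest entirely on the equal-cost sweep, which is sound as you describe it; charging postings read and forward-file accesses, not only distance evaluations, is the right precaution. Drop the appeal to the corollary and let the measurement carry the claim.
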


\begin{observation}[geometric]\label{obs:geom}
Storing quantized distances $d(o,s)$ (1 byte per posting) yields
per-candidate triangle bounds. The bounds are safe and nearly inert:
0--3\% of verifications pruned, and upper-bound-ordered verification is
strictly worse than vote-ordered (SIFT: 0.47 versus 0.98 recall at equal
budget). The one-byte-per-object personal-radius variant of
Remark~\ref{rem:noscale} was also measured: it prunes 0.8\% on SIFT and
0.06\% on GloVe at $C$ in the thousands, with mean relative slack
0.80--0.90, and both bound-orderings lose to vote order at every budget
tried. The impossibility half of the remark stands; the repair is
cheap and, on this data, not useful.
\end{observation}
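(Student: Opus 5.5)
The observation has two halves, and I would establish them differently. The first half is a theorem: the stored-distance bounds are \emph{safe}, meaning they never prune a true neighbor. The second half is empirical: the bounds are \emph{nearly inert} and lose to vote order. So the plan is a short proof of safety followed by a measurement protocol whose outputs are the reported numbers.

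\emph{Safety.} For each posting $(s,o)$ the index stores a one-byte quantization $\hat d(o,s)$ of $d(o,s)$. At query time $d(q,s)$ is known exactly for every query term $s$, from the inner search. For every shared term $s$, the triangle inequality gives
\[
|d(q,s) - d(o,s)| \;\le\; d(q,o) \;\le\; d(q,s) + d(o,s).
\]
The bounds are formed with quantization rounded conservatively: the lower endpoint of the bucket enters the upper bound, and the upper endpoint enters the lower bound. This yields per-candidate bounds $\ell(o) \le d(q,o) \le u(o)$ as a max/min over shared terms.

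Pruning discards $o$ only when $\ell(o)$ exceeds the current $k$-th best verified distance. Since that distance is itself an upper bound on the $k$-th nearest neighbor distance, no true neighbor is ever discarded. The personal-radius variant is handled the same way, invoking the two displayed inequalities of Remark~\ref{rem:noscale} with $\rho_o$ rounded down for lower bounds and up for upper bounds. I would also check safety mechanically: pruned and unpruned runs must return identical results at equal $C$, enforced by test.

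\emph{Inertness.} This is measured, not proved. On SIFT1M, GloVe-200 and Deep prefixes at the fixed $(\ks, C)$ configurations, I would record three quantities:
\begin{itemize}
\item the fraction of verifications skipped by pruning;
\item the mean relative slack $(u-\ell)/d(q,o)$ over verified candidates;
\item recall at equal verification budget under three candidate orders: vote order, upper-bound order, and lower-bound order.
\end{itemize}
The reported figures (0--3\% pruned; $0.47$ versus $0.98$ on SIFT; $0.8\%$ and $0.06\%$ for $\rho_o$; slack $0.80$--$0.90$) are then read off run manifests with their seeds.

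The main obstacle is interpretive rather than technical: making ``strictly worse at every budget tried'' robust. I would sweep $C$ over the full saturation range and both bound orderings, with several seeds, so that the conclusion is not an artifact of one operating point. I would add the mechanistic explanation that on high-LID data $d(q,s)$ and $d(o,s)$ concentrate around $\rho$-scale values, which forces the slack of order $\rho_o$ and hence explains why the bounds cannot prune.
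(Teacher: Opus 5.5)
Splitting the observation into safety from the triangle inequality and inertness read off run manifests matches how the paper supports it; the paper offers nothing beyond the remark's inequalities and the measured numbers. However, the one step that is actual mathematics fails as written, because your quantization is rounded the wrong way. A quantized value must be rounded in whichever direction \emph{weakens} the bound it enters. That direction depends on its sign in the bound, not on whether the bound is upper or lower. With $d(o,s)$ known only to lie in its bucket $[a_s,b_s]$, the safe bounds are
\[
u(o)=\min_{s}\bigl(d(q,s)+b_s\bigr),\qquad
\ell(o)=\max_{s}\max\bigl(d(q,s)-b_s,\; a_s-d(q,s),\; 0\bigr).
\]
Your rule puts $a_s$ into the upper bound, so $u(o)$ is no longer an upper bound. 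This does not affect your pruning rule, but it changes what your upper-bound ordering measures.

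Your rule also puts $b_s$ into both branches of $|d(q,s)-d(o,s)|$. Whenever $d(o,s)>d(q,s)$, your lower bound $b_s-d(q,s)$ can then exceed $d(q,o)$; take $q$ on a geodesic between $o$ and $s$, where $d(q,o)=d(o,s)-d(q,s)$ exactly. Since $\ell$ is exactly what your pruning rule compares against the $k$-th verified distance, the safety conclusion does not follow. Your two rules are also inconsistent with each other. $d(o,s)$ and $\rho_o$ enter the upper bound with the same sign, yet you round one down and the other up. Finally, the rule you take from Remark~\ref{rem:noscale} for $\rho_o$ (``down for the lower bounds'') is right only for the absence bound $\rho_o-\min_{s'}d(q,s')$. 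The shared-term bound $\max_s d(q,s)-\rho_o$ needs $\rho_o$ rounded \emph{up}, like the upper bound.

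The personal-radius variant is also not ``handled the same way'' for a second reason. The stored-distance bounds are safe unconditionally, because they use the true distances of terms that are in $o$'s signature. The absence bound instead needs $d(o,s')\ge\rho_o$ for every sample point $s'$ \emph{outside} the signature. That holds only if the signature is exactly the $\kb$ nearest sample points, which is true in the model of \S\ref{sec:model} but not for signatures produced by an approximate inner index such as GRAFT. Suppose the build-time search missed some $s^\dagger$ with $d(o,s^\dagger)<\rho_o$, and $s^\dagger$ is among the terms of a query $q$ with $d(q,o)\le\epsilon<\tfrac12(\rho_o-d(o,s^\dagger))$. Then the bound evaluates to at least $\rho_o-d(o,s^\dagger)-\epsilon>\epsilon\ge d(q,o)$. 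So that half is safe only with an exact inner search; otherwise your identical-output test is an empirical check, not a proof.

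A smaller protocol point concerns the slack. Since $\ell\le d(q,o)$, your $(u-\ell)/d(q,o)\ge u(o)/d(q,o)-1$ is not confined to $[0,1]$; it exceeds $1$ whenever $u(o)>2\,d(q,o)$. Check that it is really the quantity reported as $0.80$--$0.90$. The slack that governs pruning is $(d(q,o)-\ell(o))/d(q,o)\in[0,1]$.
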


\paragraph{Adaptive stopping.} Processing lists nearest-first and
stopping when the ranking stabilizes was simulated at three signal
strengths (vote-set stability, tolerant stability, verified-top-$k$
patience with cached distances). Every variant lands on or below the
static $(\ks, C)$ frontier; an oracle stop shows $1.8$--$2.2\times$
headroom that no cheap signal reached. A calibrated sequential test on
the vote gap --- the one variant the model does not forbid --- is
untested and listed as future work.

\paragraph{Evidence placement (posting subsampling).} Reading a random
$\beta$-fraction of each list is not equivalent to the native dial
$\kb' = \beta\kb$: thinning keeps sparse evidence at the full radius
$\rho$, the dial keeps dense evidence at radius $\beta\rho$. Measured at
equal postings read (Table~\ref{tab:placement}), the native truncation
dominates everywhere, and thinned-32 loses to native-16 while reading
twice the postings. Under a posting budget, the nearest slots are the
right ones; approximate merging by subsampling is dominated by moving
the dials.

\begin{table}[t]\centering
\caption{Evidence placement at equal postings read (SIFT1M, $\kb{=}64$
build, identical query terms, seed~1; seed~2 replicates within
$\pm0.002$). ``native-$k$'': the $k$ nearest signature slots;
``thinned-$k$'': a random $k$ of the 64.}
\label{tab:placement}
\begin{tabular}{lrrr}
\toprule
arm & postings/query & recall@$C{=}1000$ & recall@$C{=}5000$ \\
\midrule
full-64 & 351,668 & 0.9521 & 0.9941 \\
native-32 & 175,736 & 0.9418 & 0.9924 \\
thinned-32 & 175,820 & 0.8922 & 0.9824 \\
native-16 & 87,059 & 0.9174 & 0.9875 \\
thinned-16 & 87,883 & 0.7695 & 0.9392 \\
\bottomrule
\end{tabular}

\end{table}

\section{Vocabulary selection}\label{sec:coverage}

Sample placement cannot change the mean list length $\rho$; it changes
the distribution, and on hub-heavy data that distribution is wild. The
re-sampling heuristic evaluated here retires the most-loaded samples and
promotes members of the most idle lists ($\delta = 0.15$, two
iterations, rebuild each time; direction matters --- splitting hubs
loses).

Table~\ref{tab:coverage} reports the effect over five sample seeds. On
SIFT the gain is large at small budgets ($+\num{sift/cov-gain@C500}$
recall at $C{=}500$, far beyond seed noise of $\pm 0.002$) and decays to
nothing at $C{=}20{,}000$. On GloVe the gain at small $C$ is under two
points and the effect \emph{reverses} at large $C$ (about $-0.004$ at
$C{=}50{,}000$). The summary supported by the seeded data: coverage
re-sampling helps exactly where the verification budget is the binding
constraint, is neutral-to-harmful where the budget is generous, and
costs two extra builds. Single-run measurements overstate it; the seeds
are what reveal the sign reversal.

\begin{table}[t]\centering
\caption{Coverage re-sampling with seeds: recall@10, mean $\pm$ std over
5 sample seeds, $\ks{=}64$, idf scoring.}
\label{tab:coverage}
\begin{tabular}{llrr}
\toprule
dataset & $C$ & uniform & 2 spread iterations \\
\midrule
sift & 500 & 0.9052$\pm$0.0017 & 0.9425$\pm$0.0007 \\
sift & 2,000 & 0.9790$\pm$0.0006 & 0.9910$\pm$0.0004 \\
sift & 20,000 & 0.9991$\pm$0.0000 & 0.9993$\pm$0.0000 \\
\midrule
glove & 1,000 & 0.6956$\pm$0.0015 & 0.7145$\pm$0.0012 \\
glove & 5,000 & 0.8357$\pm$0.0020 & 0.8444$\pm$0.0010 \\
glove & 50,000 & 0.9507$\pm$0.0008 & 0.9469$\pm$0.0010 \\
\bottomrule
\end{tabular}

\end{table}

\section{The merge}\label{sec:merge}

The merge scores candidates from $\ks$ posting lists --- the
$t$-occurrence problem \cite{li2008divideskip}. The production
implementation is a dense per-thread scoreboard with touched-entry
reset; a radix-partitioned variant with an $O(\ks\rho)$ transient
working set serves the disk-resident and large-$n$ regimes with
bit-identical output. Five implementations (scoreboard, radix,
stratified branch-and-bound, quotiented, generate-and-score) read the
same posting volume within constant factors; their differences are
locality and representation. Two remarks bound what is possible:

\emph{Reading.} Within-list weights are uniform (count and idf attach to
the list, not the posting), so any \emph{exact} top-$C$ merge on this
layout must read every posting of the selected lists: for a skipped
posting, one consistent completion places its object one vote below the
cutoff and another does not. This is scoped to rank-major,
id-unordered lists and uniform weights: on id-sorted layouts the
skipping algorithms of \cite{li2008divideskip} apply in principle, and
their measured skip rates in our regime (many lists, low thresholds) are
work in progress; non-uniform within-list weights would re-enable
WAND-style pruning at a recall price Observation~\ref{obs:stat}
measures.

\emph{Accounting.} Total postings are $\kb n$ over $\alpha n$ lists, so
a query streams $W = h\,\ks\rho$ postings, with $h \ge 1$ a hubness
factor ($h = 1.72$--$1.75$ measured on SIFT across
$\alpha = 1$--$4\%$), and $W$ is independent of $n$ at fixed dials. Per
query, the identity $W \times |S| = h\,\ks\kb n$ says that halving
residency doubles the stream. Batching refines what is conserved: for a
batch of $B$ queries the posting matrix can be streamed once and
accumulated as a tiled sparse product, so posting-\emph{read} traffic
per query falls as $\kb n / B$ beyond $B^\ast = m/\ks$ queries, while
the vote \emph{additions} $B\ks\rho$ are irreducible --- they are the
votes. The conserved quantity is operations, not bytes; an
implementation and measurement of the tiled batch merge is future work.

Query anatomy at the measured configurations: selecting lists is
$\le 4\%$ of query time, verification is bounded by $C$, and 75--93\% is
the accumulate loop, compute-bound through $10^7$ and cache-bound at
$10^8$ on a 4-socket machine. The merge is where the design pays for its
properties, and \S\ref{sec:position} discusses what does and does not
help.

\section{Experiments}\label{sec:experiments}

\paragraph{Protocol.} Machines: MacBook Air M5 (10 cores, 32\,GB,
fanless; QPS on this machine varies up to 25\% across identical runs
from thermal state, so recall differences are the reliable signal at
$10^6$ and throughput is reported as the median of three interleaved
repetitions), and a 4-socket Intel Xeon E7-4809\,v3 (64 threads,
1.5\,TB, NVMe; a 2015 NUMA machine --- its per-thread merge cost at
$10^8$ includes a documented remote-access penalty). Datasets: SIFT1M
(128d, $\ell_2$), GloVe-200 (angular, unit-normalized, $n =
1{,}183{,}514$), Deep 1M/10M/100M (96d, $\ell_2$, nested prefixes of
Deep1B with exact ground truth per prefix). Queries: the standard
10{,}000-query sets ($2{,}000$ for the cgroup runs). Recall\;=\;
recall@10 against exact ground truth. \misi{} runs use
$\alpha{=}2\%$, $\kb{=}64$, idf scoring, GRAFT as the inner index, and
5 sample seeds at $n \le 10^7$ unless stated; baselines are swept to a
saturation rule ($\Delta$recall $< 5{\cdot}10^{-4}$ per
$1.5\times$ step, or a latency floor). Every number in this paper is
generated from a JSON run manifest checked into the repository; tables
are emitted by script from those manifests.

\subsection{One million objects, including the ancestor}

Table~\ref{tab:e1} and Figure~\ref{fig:pareto} compare \misi{}, NAPP
(NMSLIB implementation, pivots $\in \{2, 8, 32\}{\cdot}10^3$, index/search
thresholds swept) and \texttt{hnswlib} (M $\in \{16, 32\}$, ef swept to
saturation) under one protocol.

\begin{table}[t]\centering
\caption{Best configuration per system at fixed recall bands, $10^6$
scale, M5 laptop, batch queries, all threads. Full grids in the
repository manifests.}
\label{tab:e1}
\begin{tabular}{llrrl}
\toprule
dataset & system & recall@10 & QPS & configuration \\
\midrule
sift & misi ($\ge$0.95) & 0.9504 & 4,894 & $k_s{=}16$, $C{=}2,000$ \\
sift & misi ($\ge$0.99) & 0.9901 & 2,637 & $k_s{=}16$, $C{=}5,000$, cov \\
sift & misi (max) & 0.9993 & 776 & $k_s{=}64$, $C{=}20,000$, cov \\
sift & napp ($\ge$0.95) & 0.9645 & 2,371 & $t{=}4$ \\
sift & napp ($\ge$0.99) & 0.9939 & 1,207 & $t{=}8$ \\
sift & napp (max) & 0.9994 & 373 & $t{=}2$ \\
sift & hnswlib ($\ge$0.95) & 0.9571 & 28,735 & ef${=}58$ \\
sift & hnswlib ($\ge$0.99) & 0.9954 & 15,751 & ef${=}133$ \\
sift & hnswlib (max) & 0.9992 & 3,365 & ef${=}679$ \\
\midrule
glove & misi ($\ge$0.9) & 0.9143 & 494 & $k_s{=}64$, $C{=}20,000$, cov \\
glove & misi ($\ge$0.95) & 0.9690 & 257 & $k_s{=}128$, $C{=}50,000$ \\
glove & misi (max) & 0.9803 & 149 & $k_s{=}256$, $C{=}50,000$ \\
glove & napp ($\ge$0.9) & 0.9146 & 175 & $t{=}3$ \\
glove & napp ($\ge$0.95) & 0.9614 & 136 & $t{=}4$ \\
glove & napp (max) & 1.0000 & 34 & $t{=}2$ \\
glove & hnswlib ($\ge$0.9) & 0.9175 & 1,487 & ef${=}452$ \\
glove & hnswlib ($\ge$0.95) & 0.9532 & 717 & ef${=}1019$ \\
glove & hnswlib (max) & 0.9865 & 239 & ef${=}3442$ \\
\bottomrule
\end{tabular}

\end{table}

\begin{figure}[t]\centering
\includegraphics[width=\linewidth]{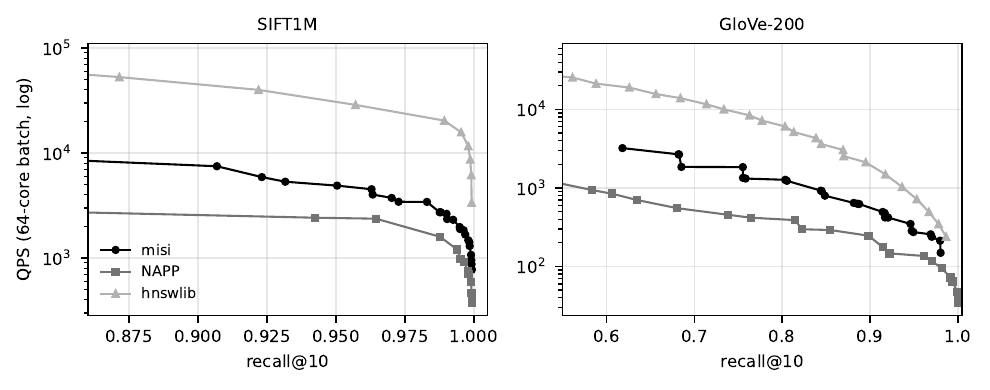}
\caption{Pareto frontiers at $10^6$ (pooled over configurations and
seeds; log QPS). \texttt{hnswlib} dominates both vote-and-verify
structures on SIFT; on GloVe the three systems interleave, and NAPP's
threshold dial reaches exact recall at 34\,QPS.}
\label{fig:pareto}
\end{figure}

Three facts. First, \texttt{hnswlib} dominates on SIFT: at $0.99$ recall
it is \num{sift/hnsw-over-misi@0.99}$\times$ faster than \misi{}
(\num{sift/hnswlib/qps@0.99} versus \num{sift/misi/qps@0.99} QPS), and
its saturated ceiling matches \misi's. On easy in-RAM data at this
scale there is no throughput or recall argument for either
vote-and-verify structure; the argument is construction time and the
properties of \S\ref{sec:position}. Second, \misi{} improves on its
ancestor by about \num{sift/misi-over-napp@0.99}$\times$ at matched
recall on SIFT --- a real but modest factor at this scale, consistent
with the analysis: the linear vocabulary's advantage is asymptotic, not
constant-factor. Third, on hub-heavy GloVe the ordering inverts in
places: NAPP's threshold degrades gracefully toward exhaustive search
and reaches $0.9917$ recall (and eventually $1.0$) at low QPS, above
both \misi's grid and \texttt{hnswlib}'s saturated $0.9865$; \misi{}
reaches $0.9803$ only at $\ks{=}256$, $C{=}50{,}000$ --- verifying
$4.2\%$ of the database. On hard data, every structure here buys its
last points of recall by approaching a scan, and the honest comparison
is who does so most gracefully.

\subsection{Scaling}

\begin{table}[t]\centering
\caption{The verification budget for $0.99$ recall over nested Deep
prefixes at fixed dials ($\alpha{=}2\%$, $\kb{=}64$).}
\label{tab:scale}
\begin{tabular}{lrrl}
\toprule
$n$ & $C_{0.99}$ & recall at $C_{0.99}$ & basis \\
\midrule
$10^6$ & 5,000 & 0.9917$\pm$0.0002 & 5 seeds, $k_s{=}64$ \\
$10^7$ & 10,000 & 0.9921$\pm$0.0003 & 5 seeds, $k_s{=}64$ \\
$10^8$ & 20,000 & 0.9936 & 1 seed, $k_s{=}128$ (uniform artifact) \\
\bottomrule
\end{tabular}

\end{table}

\begin{figure}[t]\centering
\includegraphics[width=\linewidth]{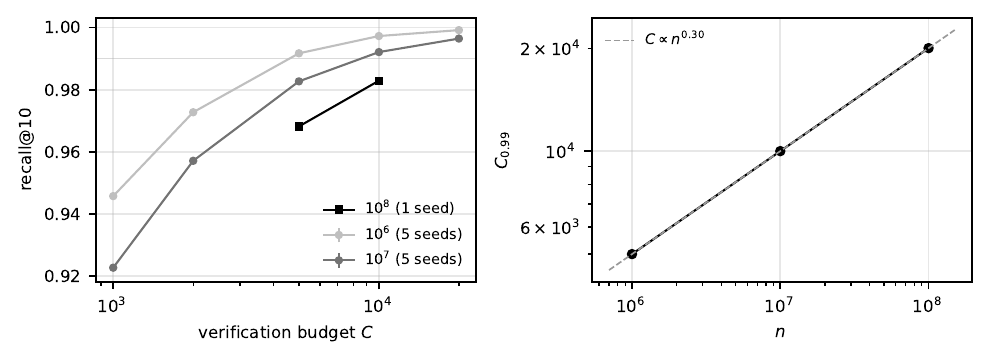}
\caption{Left: recall versus verification budget at three scales
(error bars: std over 5 seeds where present). Right: the fitted budget
law $C_{0.99} \propto n^{\num{scaling/gamma}}$.}
\label{fig:scale}
\end{figure}

Measured on nested prefixes of one distribution with seeds
(Table~\ref{tab:scale}), the budget doubles per decade:
$C_{0.99} \propto n^{\num{scaling/gamma}}$ over
$10^6$--$10^8$. The fraction of the database verified still falls by
$\sim\!5\times$ per decade, but the absolute budget grows, and the model
section's account is that the confusable count $|K(q,\omega^-)|$ grows
with $n$ at fixed resolution. The dial tests (Table~\ref{tab:dials})
move as the model predicts: recall at fixed $C$ rises with $\alpha$
(more evidence at finer resolution) and with $\kb$ (more evidence at
the same resolution), and both purchases are paid at build time.

\begin{table}[t]\centering
\caption{Dial tests at $10^7$ (seed 1, $\ks{=}64$).}
\label{tab:dials}
\begin{tabular}{llrrr}
\toprule
dial & value & recall@$C{=}5000$ & recall@$C{=}20000$ & build (s) \\
\midrule
-- & $\alpha{=}2\%$, $k_b{=}64$ & 0.9829 & 0.9963 & 312 \\
$\alpha$ & 1\% & 0.9776 & 0.9953 & 242 \\
$\alpha$ & 4\% & 0.9864 & 0.9969 & 399 \\
$\alpha$ & 8\% & 0.9876 & 0.9965 & 515 \\
$k_b$ & 16 & 0.9635 & 0.9878 & 109 \\
$k_b$ & 32 & 0.9765 & 0.9938 & 211 \\
$k_b$ & 128 & 0.9851 & 0.9969 & 600 \\
\bottomrule
\end{tabular}

\end{table}

\begin{figure}[t]\centering
\includegraphics[width=0.55\linewidth]{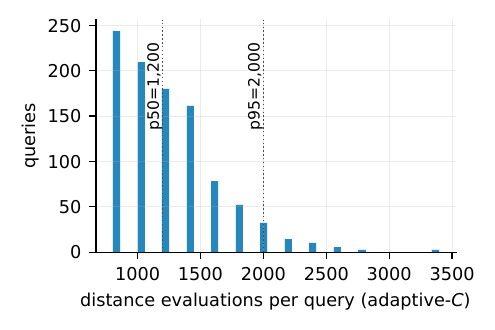}
\caption{Per-query verification budgets under patience-based adaptive
stopping (Deep-10M, single seed): the spread between easy (p10 $=$ 800)
and hard (max $=$ 3{,}400) queries. In the model's terms this is the
per-query confusable count $|K(q,\omega^-)|$ made visible; the adaptive
rule prices it without a tuned $C$.}
\label{fig:hardness}
\end{figure}

\subsection{Saturated ceilings at $10^8$}

\begin{table}[t]\centering
\caption{Maximum recall reached by each system at $10^8$ when its search
dial is swept to the saturation rule. \misi{} rows are single-seed.
DiskANN's sweep was limited by the largest $L$ run, not by saturation.}
\label{tab:sat}
\begin{tabular}{lrrll}
\toprule
system & max recall@10 & QPS there & config & saturated? \\
\midrule
GRAFT (quality, in RAM) & 0.9992 & 506 & ef${=}4000$ & yes \\
HNSW (M32, in RAM) & 0.9989 & 540 & ef${=}4000$ & borderline \\
DiskANN (SSD, PQ 9.6\,GB) & 0.9984 & 568 & $L{=}800$ & no (sweep limit) \\
misi (in RAM, coverage) & 0.9967 & 175 & $k_s{=}128$, $C{=}20,000$ & no (dial limit) \\
\bottomrule
\end{tabular}

\end{table}

Recall ceilings are a recurring claim in this literature, and sweeping
every dial to an explicit saturation rule is the only way to evaluate
one. Under such sweeps (Table~\ref{tab:sat}), GRAFT's quality
configuration reaches \num{sat/graft-max}, HNSW \num{sat/hnsw-max}, and
DiskANN \num{sat/diskann-max}, all above \misi's \num{sat/misi-max}, at
throughputs between $506$ and $568$ QPS --- within a factor of $3.2$ of
\misi's $175$ at its own maximum. No system here holds a ceiling
advantage: at $10^8$ all four converge within half a point of exact at
low throughput, and the differences that matter at this scale are
construction cost, residency, and updates, not the last digit of
recall.

\subsection{Memory-limited serving}

\begin{table}[t]\centering
\caption{Serving Deep-100M under enforced cgroup memory budgets
(\texttt{MemoryMax}, swap off, inputs evicted, postings and vectors on
NVMe; $2{,}000$ queries; \misi{} uniform artifact, disk residency,
radix merge). DiskANN: prebuilt R64/L100, PQ codes $9.6$\,GB, beam
$W{=}4$, no node cache.}
\label{tab:memory}
\begin{tabular}{llrrrr}
\toprule
budget & system & recall@10 & QPS & p50 (ms) & p99 (ms) \\
\midrule
2--4\,GB & misi & \multicolumn{4}{l}{OOM at load (inner-index build)} \\
2--8\,GB & DiskANN & \multicolumn{4}{l}{OOM (PQ table alone is 9.6\,GB)} \\
\midrule
8\,GB & misi ($k_s{=}64$, $C{=}5,000$) & 0.9682 & 30 & 669 & 728 \\
8\,GB & misi ($k_s{=}64$, $C{=}10,000$) & 0.9829 & 16 & 1226 & 1302 \\
8\,GB & misi ($k_s{=}128$, $C{=}20,000$) & 0.9936 & 8 & 2513 & 2723 \\
\midrule
16\,GB & misi ($k_s{=}64$, $C{=}10,000$) & 0.9829 & 20 & 1040 & 1144 \\
16\,GB & misi ($k_s{=}128$, $C{=}20,000$) & 0.9936 & 10 & 2121 & 2337 \\
12\,GB & DiskANN ($L{=}50$) & 0.9315 & 6179 & 10\,(mean) & -- \\
12\,GB & DiskANN ($L{=}200$) & 0.9874 & 1983 & 32\,(mean) & -- \\
12\,GB & DiskANN ($L{=}800$) & 0.9984 & 568 & 111\,(mean) & -- \\
\bottomrule
\end{tabular}

\end{table}

A memory claim is only as good as the enforcement behind it, so
Table~\ref{tab:memory} is measured under cgroup limits with swap off
and inputs evicted, and it corrects the naive arithmetic in both
directions. \misi's measured serving floor is between $4$ and
$8$\,GB --- not the $0.78$\,GB of sample vectors --- because the inner
index is rebuilt over the
$2{\cdot}10^6$-point sample at load time, and that build, not the
serving state, sets the peak. Within an $8$\,GB budget \misi{} serves
$10^8$ vectors at \num{e5/misi-8g-qps-low}--\num{e5/misi-8g-qps-high}
QPS with second-scale latencies; DiskANN does not run at $8$\,GB (its
PQ table alone is $9.6$\,GB) and at its native $12$\,GB serves the same
recall band at $\sim\!100\times$ the throughput
(\num{e5/diskann-12g-qps@0.987} QPS at $0.9874$), because it reads
$\sim\!10^2$ disk locations per query where \misi{} reads
$C \approx 10^4$. The narrow conclusion supported by measurement: there
is a real 4-GB-wide window in which \misi{} answers and the SSD-graph
baseline cannot, at single-digit-to-tens QPS; below that window neither
runs; above it DiskANN dominates throughput. The composition that would
change this --- PQ codes resident as a verification screen, cutting the
random reads to $\sim\!10^2$ --- is designed but not implemented, and
this paper claims nothing for it.

\subsection{Construction, portability, streaming}

\begin{table}[t]\centering
\caption{Construction wall-clock, all threads, same machine per row.
Coverage re-sampling costs two additional builds where used.}
\label{tab:build}
\begin{tabular}{lrl}
\toprule
dataset (machine) & misi build & baselines, same machine \\
\midrule
SIFT1M (laptop) & 12--24\,s & GRAFT 111\,s; hnswlib M32 152\,s; NAPP 24\,s ($m{=}$2k) to 1048\,s ($m{=}$32k) \\
GloVe-200 (laptop) & 137--194\,s & GRAFT 580\,s; hnswlib 462\,s; NAPP 96--2224\,s \\
Deep-10M (Xeon) & 302\,s & GRAFT 635\,s (tuned) / 1,140\,s (quality); DiskANN 23\,min \\
Deep-100M (Xeon) & 5,250\,s (87.5\,min) & GRAFT 14,765\,s (default) / 19,675\,s (quality); HNSW M32 $\sim$2.4\,h \cite{graft} \\
\bottomrule
\end{tabular}

\end{table}

Construction (Table~\ref{tab:build}) is where the design's structure
shows plainly: $n$ independent searches parallelize without
coordination, and \misi{} builds $10^8$ vectors in
\num{build/deep100m-misi-s}\,s against $19{,}675$\,s for the graph
configuration that matches its recall --- with the caveats that
coverage re-sampling, where used, spends most of that lead, and that
GRAFT's faster default configuration ($14{,}765$\,s) trails by less.

The artifact is a directory of flat arrays. As a portability check, the
Deep-10M index built on the Xeon was copied to the laptop, loaded in
$15.7$\,s, and served at $862$\,QPS at $0.9911$ recall --- matching the
4-socket server at equal recall, per-core locality beating $64$
contended cores on this workload. Construction also runs fully
streamed: signatures are computed in chunks against the frozen inner
index and the inverted file is produced by an external sort, with
output bit-identical to the in-memory build; under an enforced
$3$\,GiB cgroup cap (inputs evicted, peak read from the cgroup) the
Deep-10M factory build completes in $462$\,s, an $8\%$ penalty over
uncapped. The $10^8$ factory build wrote its artifact in $6{,}726$\,s.

\subsection{Beyond $\ell_2$, and a GPU port}

With a brute-force inner index the pipeline consumes the dissimilarity
only through induced rankings plus final evaluations. Dot-product
similarity on Deep-5M indexes directly ($0.9930$ recall@10). Under
adversarial MIPS (lognormal norms) global hubs make signatures alike
(posting skew $535$) and recall collapses to $0.28$; coverage
re-sampling does not move it, because the skew is similarity-induced
rather than sample-induced; the Neyshabur--Srebro reduction restores
$0.9907$ (skew $37$). The design composes with reductions as with inner
indexes; no non-vector metric dataset is evaluated here, and the title's
scope should be read accordingly: metric by construction, evaluated on
vectors.

A three-kernel CUDA port (sketch accumulation, survivor collection,
exact scoring) serves the Deep-5M index at $1{,}139$\,QPS at $0.9926$
recall on a 65\,W laptop GPU (RTX 3060, $\sim\!4$\,GB VRAM),
$4.1\times$ the same machine's 16-thread CPU. No GPU ANN baseline was
run --- CAGRA and FAISS-GPU on the same card are required before any
comparative claim, and the measurement is reported only as evidence
that the counting kernel ports.

\section{Position}\label{sec:position}

Where the measurements place this design, stated without advocacy:

\emph{Against its ancestor}, \misi{} is NAPP with the two scale walls
removed. At $10^6$ that is worth about
\num{sift/misi-over-napp@0.99}$\times$; the structural difference is
that NAPP at $10^8$ with $m = 32{,}000$ pivots would need
$3.2{\cdot}10^{12}$ brute-force pivot distances at build and mean lists
of $10^5$ postings, while \misi's lists stay at $\rho = 3{,}200$ and
its pivot search is delegated. The lineage claim is architectural, and
the constant factors at small scale are honest about it.

\emph{Against in-RAM graphs}, \misi{} loses throughput at matched
recall by $6$--$16\times$ at $10^6$ and, against saturated
configurations, holds no recall-ceiling advantage at any scale
measured. It wins construction time
($\num{build/ratio-graft-quality}\times$ at $10^8$ against the
matched-quality build), determinism, and artifact portability.

\emph{Against the SSD graph}, \misi{} serves in a measured $4$-GB
window where DiskANN cannot, and loses throughput by two orders of
magnitude at DiskANN's native budget. The verification I/O gap
($10^4$ versus $10^2$ reads per query) is the whole story, and a
resident PQ screen is the obvious, unimplemented reply.

\emph{The merge} is the cost center (75--93\% of query time), and its
volume is fixed by the reading argument of \S\ref{sec:merge} for exact
scoring on this layout. What remains open, in order of theoretical
leverage: skipping on id-sorted layouts (measured skip rates pending),
learned within-list impacts (would re-enable WAND at a recall price not
yet mapped), calibrated sequential stopping, tiled batch execution
(bytes amortize; operations do not), and bandwidth-rich substrates ---
the GPU measurement above is a first point on that curve, and bank-level
processing-in-memory DRAM is a speculative later one: under the same
id-block layout the accumulate is a per-bank local histogram and a
verification is an in-bank dot product, though no such device has run
this workload.

\section{Limitations and future work}

The update path is a design, not an implementation. Non-vector metrics
are unevaluated. The SPANN comparison is absent. All $10^8$ \misi{}
rows are single-seed. The 4--8\,GB serving floor is an artifact of
rebuilding the inner index at load; a streamed or serialized inner
index would lower it, and its size is the binding residency at every
scale --- at $10^{12}$, the sample itself outgrows RAM, and recursive
construction (this index over its own sample) is the design's own
answer, gated on inner-index recall at $m \ge 10^8$, which is
unmeasured. A resident PQ verification screen is the change most likely
to alter the memory-limited serving table. The coverage heuristic needs
a selection principle rather than a direction that happens to win; the
per-query confusable count that Theorem~\ref{thm:vote} makes measurable
is the natural target for calibrated stopping. These are stated as open
because they are.

\paragraph{Reproducibility.} Code, the \texttt{misifu} CLI, every run
manifest, and the scripts that generate every table and figure in this
paper from those manifests:
\url{https://github.com/zevahcle/MISIFU}.

\bibliographystyle{abbrv}

\end{document}